\newtheorem{theorem}{Theorem}[section]
\newcommand*{\vcenteredhbox}[1]{\begingroup
\setbox0=\hbox{#1}\parbox{\wd0}{\box0}\endgroup}
\DeclareMathOperator{\tr}{tr}
\title{Entanglement Wedge Cross Sections Require Tripartite Entanglement}
\author[1]{Chris Akers}
\author[2]{and Pratik Rath}
\affiliation[1]{Center for Theoretical Physics,\\
Massachusetts Institute of Technology, Cambridge, MA 02139, USA}
\affiliation[2]{Center for Theoretical Physics and Department of Physics,\\
University of California, Berkeley, CA 94720, U.S.A. and \\
Lawrence Berkeley National Laboratory, Berkeley, CA 94720, U.S.A.} 
\emailAdd{cakers@mit.edu}
\emailAdd{pratik\_rath@berkeley.edu}
\abstract{
We argue that holographic CFT states require a large amount of tripartite entanglement, in contrast to the conjecture that their entanglement is mostly bipartite. 
Our evidence is that this mostly-bipartite conjecture is in sharp conflict with two well-supported conjectures about the entanglement wedge cross section surface $EW$.
If $EW$ is related to either the CFT's reflected entropy or its entanglement of purification, then those quantities can differ from the mutual information at $\mathcal{O}(\frac{1}{G_N})$.
We prove that this implies holographic CFT states must have $\mathcal{O}(\frac{1}{G_N})$ amounts of tripartite entanglement.
This proof involves a new Fannes-type inequality for the reflected entropy, which itself has many interesting applications. 
%
%---
%It has been conjectured that holographic states are mostly bipartite entangled.
%We argue that in fact holographic states require a large amount of tripartite entanglement.
%Our first line of evidence is the entanglement wedge cross section (EW) conjectures recently proposed in AdS/CFT.
%We show that both the reflected entropy ($S_R$) and entanglement of purification ($E_P$) conjectures are in conflict with the conjectured bipartite entanglement structure.
%We further show that tripartite entanglement is crucial for encoding bulk quantum information in a secret sharing protocol as expected from holography.
%This is of particular importance in situations where the bulk code subspace can be quite large.
%We illustrate this in the presence of black holes and multiboundary wormholes.
}
\begin{document}
\maketitle

\section{Introduction}

We better understand quantum gravity every time we learn quantum information theoretic properties of holographic CFT states.
This is the spirit of the ``Geometry from Entanglement'' slogan \cite{VanRaamsdonk:2010pw,Maldacena:2013xja},
and it has been borne out in numerous discoveries.
%For example, 
%Moreover, that the CFT has the structure of a quantum error-correcting code has led to a deeper understanding of the holographic dictionary \cite{ADH,DHW}.
At the heart of these quantum information properties is the entanglement structure of the holographic CFT state.
Know the structure explicitly, and you can in principle compute whatever quantum information property you want.

Hence it has been of great interest to probe this structure in any way tractable.
Perhaps the most famous probe is a region's von Neumann entropy, whose bulk dual is simply the area divided by $4G_N$ of the minimal-area codimension-2 surface anchored to the boundary of the region \cite{Ryu:2006ef,Ryu:2006bv}.
This is the Ryu-Takayanagi (RT) formula.
It is well-known that the RT formula places strong constraints on the entanglement structure of the CFT state \cite{Bao:2015bfa}.

That said, the von Neumann entropy is a rather coarse measure of entanglement.
It works well to quantify entanglement in a bipartite pure state, but doesn't capture all the information about entanglement structure for bipartite mixed states or multipartite states.
Hence there is much less known about the multipartite structure of entanglement in holography, owing both to the fact that there have been fewer probes of it and that it is much harder to quantify (although there has been limited progress \cite{Balasubramanian:2014hda}).

%Hence it was of great interest with Cui et al. conjectured that tripartite holographic CFT states are mostly bipartite entangled \cite{Cui:2018dyq}.
%This is called the ``Mostly-Bipartite Conjecture,'' which we now define explicitly. 

It was in this context that a particularly powerful conjecture, which we call the ``Mostly-Bipartite Conjecture'' (MBC), was made by Cui et al.~in \cite{Cui:2018dyq}.
%The idea is that tripartite holographic states are ``mostly bipartite'' entangled.
We state this conjecture in detail now, as we understand it. 
\paragraph{Mostly-Bipartite Conjecture of \cite{Cui:2018dyq}:}
{ \it Consider a state of a holographic CFT with a gravitational dual well-described by semiclassical gravity. Let $c\sim\frac{1}{G_N}$ represent its central charge.
Given CFT subregions $A$,$B$, and $C$ with Hilbert spaces that each admit the decomposition $H_X = H_{X_1}\otimes H_{X_2} \otimes H_{X_3}$, the quantum state is ``close'' to the form
\begin{align} \label{eq:bipartite}
    \ket{\psi}_{ABC}=U_A\,U_B\,U_C\ket{\psi_1}_{A_1\,B_1} \ket{\psi_2}_{A_2\,C_1}\ket{\psi_3}_{B_2\,C_2}\ket{\widetilde{\psi}}_{A_3\,B_3\,C_3}
\end{align}
in the $G_N \to 0$ limit, where we demand that $\ket{\widetilde{\psi}}_{A_3\,B_3\,C_3}$ is `small' in the sense that its entropies are subleading in $G_N$,
\begin{align}
    S(A_3),S(B_3),S(C_3) \sim \mathcal{O}(1)~,
\end{align}
while
\begin{align}
   S(A_1)=&S(B_1)\approx \frac{I(A:B)}{2}~,\\
   S(A_2)=&S(C_1) \approx \frac{I(A:C)}{2}~,\\
   S(B_2)=&S(C_2) \approx \frac{I(B:C)}{2}~,
\end{align}
where the ``$\approx$'' symbol means at $\mathcal{O}(\frac{1}{G_N})$, and the mutual information is defined as $I(A:B) \equiv S(A) + S(B) - S(AB)$.} 

\vspace{3mm}

We will refer to this conjectured state \eqref{eq:bipartite} as the ``MBC state'' from now on.
We place quotes around ``close'' because it is not specified in what sense the states should be close. 
As we discuss in detail below, we will take this to mean close in natural distance measures usually applied to quantum states.

The motivation for this conjecture comes from the bit threads paradigm, in which Cui et al. found that an optimal bit thread configuration with the above bipartite structure exists.
Moreover, this simple entanglement structure is realized by random stabilizer tensor networks (RSTNs), which are simple toy models of holography in which the RT formula is satisfied \cite{Hayden:2016cfa,Nezami:2016zni}.
%If this were the entanglement structure, there would be a lot more control over the multipartite entanglement properties.

\begin{figure}[t]
\begin{center}
  \includegraphics[scale=0.5]{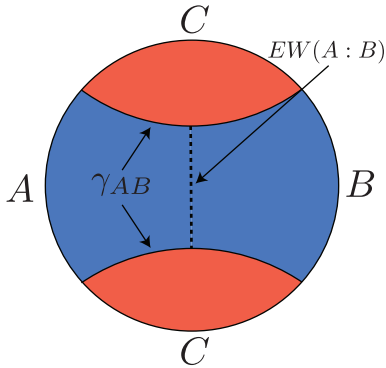} 
\end{center}
\caption{The entanglement wedge of boundary subregion $AB$ is shaded blue, while the complementary entanglement wedge, corresponding to boundary subregion $C$, is shaded red. The RT surface is $\gamma_{AB}$ (solid line), and the minimal cross section of the entanglement wedge is $EW(A:B)$ (dashed line).}
\label{fig:EW}
\end{figure}

Our goal is to argue that this entanglement structure is inconsistent with two other conjectured properties of AdS/CFT.
Both of these other conjectures relate the so-called ``minimal entanglement wedge cross section'' $EW(A:B)$, of any two CFT subregions $A$ and $B$, to information theoretic quantities of the CFT. 
We review these quantities in detail later, though see Figure~\ref{fig:EW} for a quick visual.
In the paper \cite{Dutta:2019gen}, the authors conjectured that $EW(A:B)$ equals one half a quantity called the reflected entropy, $S_R(A:B)$.
The evidence for this conjecture is very strong, and we review it later.
In the papers \cite{Takayanagi:2017knl,Nguyen:2017yqw}, the authors conjectured that $EW(A:B)$ equals a quantity called the entanglement of purification, $E_P(A:B)$.
There is also good evidence for this conjecture \cite{Bhattacharyya:2018sbw,Caputa:2018xuf,Bao:2019wcf}.
We shall refer to these as the $S_R$ and $E_P$ conjectures respectively.

Both $S_R$ and $E_P$ are more sensitive probes of multipartite entanglement than the von Neumann entropy is. 
It is this fact that places the $S_R$ and $E_P$ conjectures in tension with the MBC. 
Notably, our argument only works if either the $S_R$ or $E_P$ conjecture is true.
This is because directly computing $S_R$ and $E_P$ is difficult, so we use their respective conjectures to compute them using the bulk.

\subsection*{The Argument}

In detail, our argument proceeds in two steps.
First, we compute the reflected entropy and entanglement of purification of the state \eqref{eq:bipartite} and find that $S_R$ equals the mutual information -- and $E_P$ half the mutual information -- at leading order, $\mathcal{O}(\frac{1}{G_N})$. 
This is {\it not} true of holographic states, if either the $S_R$ or $E_P$ conjecture is correct.
It is known that $2EW(A:B) - I(A:B)$ can be non-zero at $\mathcal{O}(\frac{1}{G_N})$, which implies $S_R - I$ and $2E_P - I$ should be non-zero at leading order as well.
Therefore the MBC is in tension with the $S_R$ and $E_P$ conjectures.

That said, it is not obvious that this tension persists under small corrections to the MBC state.
Indeed, it is conceivable that some sort of small correction to \eqref{eq:bipartite} could affect its $S_R$ and $E_P$ at $\mathcal{O}(\frac{1}{G_N})$ while {\it not} affecting other quantities, such as its von Neumann entropy, at that order.
In that case, there would be no tension between these conjectures, because at any finite $G_N$ the state would be of the MBC form up to subleading corrections and also have the correct $S_R$ and $E_P$.
%(We require it does not change the von Neumann entropy so that there is some sense in which the state still takes the MBC form.)
Something like this is true for Renyi entropies, where exponentially small changes to a state can affect the Renyi entropy at $\mathcal{O}(\frac{1}{G_N})$ but only change the von Neumann entropy an exponentially small amount.

The second step in our argument is to prove that $S_R$ and $E_P$ are not sensitive to such small changes in the state. 
More precisely, we prove that $S_R$ and $E_P$ satisfy a Fannes-like continuity inequality so that when the trace distance $\frac{1}{2}||\rho - \sigma||_1$ between $\rho$ and $\sigma$ is $\epsilon$, we have
\begin{align} \label{eq:fannes}
    |S_R(A:B)_{\rho}-S_R(A:B)_{\sigma}|\leq C_1\sqrt{\epsilon}\log d~,\\
    |E_P(A:B)_{\rho}-E_P(A:B)_{\sigma}|\leq C_2\sqrt{\epsilon}\log d~,
\end{align}
where $C_1,C_2$ are $\mathcal{O}(1)$ constants and $d$ is the dimension of $\rho$ and $\sigma$.
Moreover, we argue that $\epsilon < \mathcal{O}(1)$ if $\rho$ is a holographic CFT state and $\sigma$ is a state of the form Eqn.~\eqref{eq:bipartite}.
(Otherwise, $\rho$ would not take the MBC state form when $G_N \to 0$.)
So, even though $\log d \sim \mathcal{O}(\frac{1}{G_N})$,
the $S_R$ and $E_P$ of $\rho$ is not different from that of $\sigma$ at $\mathcal{O}(\frac{1}{G_N})$.
Therefore, small corrections to Eqn.~\eqref{eq:bipartite} that vanish as $G_N \to 0$ do not resolve the tension between these conjectures.
%This implies that any small corrections to Eqn.~\eqref{eq:bipartite} will not change $S_R(A:B)$ or $E_P(A:B)$ at $\mathcal{O}(\frac{1}{G_N})$, and therefore do not resolve the tension between these conjectures. 

\subsection*{Why trace distance?}

Before proceeding, let us motivate why we use the trace distance to quantify small corrections.
The trace distance is arguably the most natural distance measure between two quantum states. 
If two states are close in trace distance, then all observables computed using one will be close to those computed using the other, inlcuding the von Neumann entropy.
Moreover, other distance measures (such as the fidelity) are quantitatively equivalent to trace distance.
There are some quantities, like the relative entropy, that quantify the similarity of two states but are not technically distance measures.
The relative entropy would work equally well for our purposes: if the relative entropy between two states is small, then their trace distance is small due to Pinsker's inequality. 

That said, there are some senses in which two states can be ``close'' without being close in trace distance.
For example, they can be ``close'' in the sense that some restricted class of observables has similar values.
It is this sense in which, for instance, ``random states'' are close to  ``Perfect states.''
Perfect states are $2n$-partite states that are maximally entangled accross any bipartition, for $n$ integer \cite{Pastawski:2015qua}.
We define a random state by acting a Haar random unitary on a fiducial $2n$-partite state.
Such random states are ``close'' to Perfect in the sense that they are nearly maximally entangled accross any bipartition.
However, they are generally far from Perfect in trace distance.\footnote{This can be seen from a simple counting argument: there are far fewer Perfect states than the total number of states. In the limit that the Hilbert space dimension goes to infinity, the average distance between any given state and the nearest Perfect state tends to zero.}

We choose not to consider ``closeness'' in this weaker sense because it is arguably against the spirit of the conjecture.
Indeed, that the von Neumann entropies of holographic CFT states match those of the MBC state was the {\it motivation} for the MBC.
The conjecture itself, as we understand it, is that the states are therefore close in some distance measure.
Inferring this stronger claim about the state from the weaker matching of entropies is what makes the conjecture so valuable.

\subsection*{Organization}
The paper is organized as follows.
We define and analyze the $S_R$ and $E_P$ conjectures in Section~\ref{sec:SR} and \ref{sec:EP} respectively. 
Also in Section~\ref{sec:SR}, we discuss why RSTNs -- which satisfy the RT formula -- fail to satisfy the $S_R$ conjecture, which naively seems like a simple application of RT.
We briefly touch on tensor networks in Section~\ref{sec:EP} as well.
Finally, we conclude with some discussion and future directions in Section~\ref{sec:discuss}.

\subsection*{Notation}

We will use the notation $S_R(A:B)$, $E_P(A:B)$ and $I(A:B)$ to denote the reflected entropy, entanglement of purification and mutual information relevant for the partition of the state about subregions $A$ and $B$.
However, in other situations where the partition is understood and we would like to make explicit the state in which these quantities are being evaluated, we shall use the notation $S_R(\rho_{AB})$, $E_P(\rho_{AB})$ and $I(\rho_{AB})$ interchangeably with the above notation.

%----------------------------------------------------------------

\section{$S_R$ Conjecture vs Bipartite Entanglement} \label{sec:SR}

\subsection{Background}

We now define the reflected entropy $S_R(A:B)$.
Consider a density matrix $\rho_{AB}$ on the Hilbert space $\mathcal{H}=\mathcal{H}_A\otimes \mathcal{H}_B$.
One can define its ``canonical purification'' in a way analogous to the relationship between the thermal density matrix and the thermofield double state \cite{Dutta:2019gen}.
There exists a natural mapping between the space of linear operators acting on a $\mathcal{H}$ and the space of states on a doubled Hilbert space $\mathcal{H}\otimes \mathcal{H}'=\mathcal{H}_A\otimes \mathcal{H}_B\otimes\mathcal{H}_{A'}\otimes \mathcal{H}_{B'}$.
This mapping is sometimes labelled the channel-state duality. 
The inner product on this doubled Hilbert space is defined by
\begin{align}\label{eq:inner}
    \braket{ \rho | \sigma}_{ABA'B'}=\tr_{AB}(\rho^{\dagger}\sigma)~.
\end{align}
Thus, the operator $\sqrt{\rho_{AB}}$ can be mapped to a state $\ket{\sqrt{\rho_{AB}}}_{ABA'B'}$,
which is named the canonical purification of $\rho_{AB}$ (and is also known as the GNS state).
This state easily can be checked to reduce to the original density matrix $\rho_{AB}$ upon tracing out the subregions $A'$ and $B'$.
Given the above setup, then

\paragraph{Definition 2.1:} {\it The} reflected entropy {\it $S_R(A:B)$ is defined as
\begin{align}\label{eq:defnSR}
    S_R(A:B)=S(AA')_{\sqrt{\rho_{AB}}}=S(BB')_{\sqrt{\rho_{AB}}}~,
\end{align}
where $S(AA')_{\sqrt{\rho_{AB}}}$ is the von Neumann entropy of the reduced density matrix on the subregion $AA'$ in the state $\ket{\sqrt{\rho_{AB}}}$.}

\vspace{3mm}

%The reflected entropy is a measure of both classical and quantum correlations between $A$ and $B$.
In \cite{Dutta:2019gen}, it was conjectured that in AdS/CFT,
\begin{align}
    2 EW(A:B)=S_R(A:B)~,
\end{align}
where $EW(A:B)$ is the area of the ``entanglement wedge cross-section,'' i.e. the minimal-area surface that divides the entanglement wedge of $AB$ into two halves, one homologous to $A$ and the other to $B$.  
This conjecture is intuitive: the reduced density matrix of $AB$ is unchanged, and $A'B'$ has the same reduced density matrix.
One can solve the equations of motion inwards from this data local to the boundary to conclude that a viable bulk solution is the one that is simply two copies of the $AB$ entanglement wedge glued together across the extremal surface that bounds it.
(The subtleties of gluing across this extremal surface were discussed in \cite{Engelhardt:2018kcs}.) 
Applying the RT formula to the $AA'$ region of this doubled bulk implies that $S(AA')_{\sqrt{\rho_{AB}}}$ equals the area of a minimal surface dividing $AA'$ from $BB'$.
The symmetry between the entanglement wedges of $AB$ and $A'B'$ implies that this minimal surface has area $2 EW$.\footnote{Evidence for the conjecture in a time-dependent situation was provided in \cite{Kusuki:2019evw,Kusuki:2019rbk}}.

\subsection{$S_R$ of the Bipartite Entangled State}\label{sec:SRMBC}

We now compute the reflected entropy in the MBC state Eqn.~\eqref{eq:bipartite} and show that it approximately equals the mutual information,
\begin{align}
	S_R(A:B) \approx I(A:B)~.
\end{align}
This, we will argue, is incompatible with AdS/CFT.
Two properties of the reflected entropy will be useful to us.
First, it is an additive quantity under tensor products: 
\begin{align}\label{eq:additivity}
    S_R(\rho_1\otimes\rho_2)=S_R(\rho_1)+S_R(\rho_2)~.
\end{align}
This is because the canonical purification of a tensor product density matrix $\rho_1\otimes\rho_2$ is given by the tensor product state $\ket{\sqrt{\rho_1}}\otimes\ket{\sqrt{\rho_2}}$. 
Second, the reflected entropy is invariant under unitaries local to $A$ or $B$, since this is equivalent to local unitaries on $A$, $A'$, $B$ and $B'$ in the purified state.
Hence the reflected entropy of the MBC state is the same as for the state
\begin{align}
    U^\dagger_A U^\dagger_B \rho_{AB} U_A U_B =\rho_{A_1\,B_1}\otimes\rho_{A_2}\otimes\rho_{B_2}\otimes\rho_{A_3 B_3}~,
\end{align}
where e.g. $\rho_{A_2} = \tr_{C_1} \ket{\psi_2}\bra{\psi_2}_{A_2 C_1}$.
Thus, the calculation of $S_R$ splits into an individual calculation for each factor.
First consider $\rho_{A_1\,B_1}=\ket{\psi_1}\bra{\psi_1}_{A_1 B_1}$.
The canonical purification is simply a product state of two copies of $\ket{\psi_1}$, and therefore 
\begin{align}
S_{R}\left(\rho_{A_1 B_1}\right)=2S(\rho_{A_1})= I(A_1:B_1)_{\rho_{A_1 B_1}} \approx I(\rho_{AB})~.
\end{align}
%Next, we consider $\rho_{A_2}=\tr_{C_1}(\ket{\phi_2}\bra{\phi_2})$. 
Because the state $\rho_{A_2}$ only has support on $A$, its canonical purification is given by an entangled state shared between $A$ and $A'$ while $B$ and $B'$ remain trivial.
The same argument can be applied to $\rho_{B_2}$ as well.
Therefore their reflected entropies vanish, 
\begin{align}
S_{R}\left(\rho_{A_2}\right)=0
\quad \text{and} \quad
S_R(\rho_{B_2})=0~.
\end{align}
Although we have not specified any details of the state $\ket{\tilde{\psi}}_{A_3\,B_3\,C_3}$, we can use the general inequality
\begin{align}
    S_R(\rho_{A_3\,B_3})&\leq 2\,\min\{ S(\rho_{A_3}),S(\rho_{B_3})\}=O(1)
\end{align}
to put an upper bound on the contribution to $S_R$ from $\rho_{A_3 B_3}$.
It is a positive $\mathcal{O}(1)$ number, at most.
Putting everything together, we find that the reflected entropy equals 
\begin{equation}
\begin{split}
    S_R(\rho_{AB})&=S_R(\rho_{A_1\,B_1})+S_R(\rho_{A_2})+S_R(\rho_{B_2})+S_R(\rho_{A_3\,B_3})\\
		  &=I(\rho_{AB})+O(1)~.
\end{split}
\end{equation}
Hence in the $G_N \to 0$ limit, $S_R(A:B)=I(A:B)$ for the MBC state.

\begin{figure}[t]
\begin{center}
  \includegraphics[scale=1]{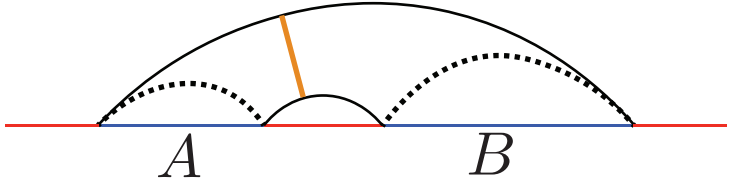} 
\end{center}
\caption{Subregion $AB$ at the threshold of a mutual information phase transition. There are two competing RT surfaces, denoted by solid and dashed black lines. The area of the dashed lines is equal to the area of the solid lines. $EW(A:B)$ before the transition is denoted by a solid orange line, while it vanishes after the transition.}
\label{fig:SRI}
\end{figure}

\subsubsection*{AdS/CFT conflict}
We now argue that this is in conflict with $S_R(A:B) = 2{EW(A:B)}$ in AdS/CFT.
The idea is that $EW(A:B)$ can be larger than $I(A:B)$ at $\mathcal{O}(\frac{1}{G_N})$. 
This is true in many generic cases, but we now provide a sharp example in which this is especially clear, from \cite{Umemoto:2019jlz}.

Consider the setup in Figure~\ref{fig:SRI}.
As one varies the distance between subregions $A$ and $B$ of a fixed size, one encounters a phase transition in the RT surfaces.
%At the phase transition, if the distance is increased, the RT surfaces snap from being connected to disconnected.
At the phase transition, both $I(A:B)$ and $EW(A:B)$ vanish.
However, at slightly shorter separations the two are quite different.
While the mutual information continuously shrinks to zero as the separation is increased, the cross-section remains $\mathcal{O}(\frac{1}{G_N})$ until exactly at the phase transition, where it discontinuously jumps to zero. 
Therefore, given $S_R(A:B) = 2 EW(A:B)$, we must conclude that the MBC state is incompatible with AdS/CFT.

\subsection{Small Corrections}\label{sec:small}

So far, we have not ruled out that the $S_R$ conjecture is consistent with the MBC state {\it with small corrections.} 
One might imagine that the reflected entropy, being non-linear in the state, could receive large corrections from terms that are subleading in $G_N$ to those in Eqn.~\eqref{eq:bipartite}.\footnote{We would like to thank Matt Headrick for discussions related to this.}
Then there would be no tension between the $S_R$ conjecture and MBC:
For any finite $G_N$, the holographic CFT state could take the form of the MBC state up to subleading terms, but its reflected entropy could be different at $\mathcal{O}(\frac{1}{G_N})$.
For comparison, this is how Renyi entropies work.
Renyi entropies are also non-linear in the state, and can change at $\mathcal{O}(\frac{1}{G_N})$ under non-perturbatively small changes to the state.

We quantify corrections to the state in terms of the natural distance measure, trace distance, defined as
\begin{align}
	T(\rho,\sigma) = \frac{1}{2}|| \rho - \sigma||_1~,
\end{align}
where $\rho$, $\sigma$ are two density matrices, and $||A||_1 = \tr(\sqrt{A^{\dagger}A})$ is the Schatten 1-norm or $L_1$ norm.
It can take values $T(\rho,\sigma) \in [0,1]$, and when the trace distance is close to $0$ then all observables are are close between the states.
If the trace distance is exactly zero, then the two states are identically equal. 
If two states admit a $G_N$ expansion, like $\rho = \rho_0 + G_N \rho_1 + \mathcal{O}(G_N^2)$, then the trace distance between them does as well:
\begin{align}
	T(\rho,\sigma) = T_0(\rho,\sigma) + G_N T_1(\rho,\sigma) + \mathcal{O}(G_N^2)~.
\end{align}
We say that two states are the same at leading order if $T_0 = 0$, i.e. $T(\rho,\sigma) \sim \mathcal{O}(G_N)$.\footnote{In fact, for the purpose of our analysis $T(\rho,\sigma) \sim \mathcal{O}(G_N^a)$ with any $a>0$ works.}
For our purposes, we could equally-well use other distance measures between states, such as the fidelity, or similarity measures like the relative entropy.

We interpret the MBC as the statement the trace distance vanishes at leading order in $G_N$ between a holographic CFT state $\rho$ and some state $\sigma$ of the form Eqn.~\eqref{eq:bipartite}.
This is for two reasons.
First, as stated above, so that $\rho$ and $\sigma$ become the same in the $G_N \to 0$ limit.
Second, because this would give a satisfactory reason for the von Neumann entropies to match at leading order (even at finite $G_N$).
(After all, this was essentially the motivation for the conjecture in the first place!)
This is due to Fannes inequality \cite{fannes1973continuity}, which states
\begin{align}
	|S(\rho)-S(\sigma)|\leq 2 T(\rho,\sigma)\log d-2 T(\rho,\sigma)\log(2 T(\rho,\sigma))~,
\end{align}
where $d$ is the dimension of $\rho$ and $\sigma$.
For holographic CFTs, $\log d \sim \mathcal{O}(\frac{1}{G_N})$, and thus if $T(\rho,\sigma) \lesssim \mathcal{O}(G_N)$, the von Neumann entropies will be guaranteed to match at $\mathcal{O}(\frac{1}{G_N})$. 
%In fact the stricter restraint $T(\rho,\sigma) \sim \mathcal{O}(1/\frac{1}{G_N})$ is perhaps more physically motivated, if we expect the von Neumann entropy to organize into an expansion with no $\mathcal{O}(N) \sim \mathcal{O}(\sqrt{G_N})$ term, where $G_N$ is the bulk Newton's constant.
%We do not make this stronger restriction because our argument will not depend on it.

So, we are interested in whether the reflected entropy can differ at $\mathcal{O}(\frac{1}{G_N})$ between the MBC state $\sigma$ and a holographic CFT state $\rho$ that differs from it only at $\mathcal{O}(G_N)$ and higher,
\begin{align}
	T(\rho,\sigma) \sim \mathcal{O}(G_N)~.
\end{align}
We now prove this is, in fact, not possible;
the reflected entropy satisfies a continuity inequality similar to Fannes inequality for the von Neumann entropy.

\begin{theorem}[Continuity of the Reflected Entropy]\label{thm1}
Given two density matrices $\rho_{AB}$ and $\sigma_{AB}$ defined on a Hilbert space $\mathcal{H}=\mathcal{H}_A\otimes \mathcal{H}_B$ of dimension $d=d_A\,d_B$, such that $T_{AB}=T(\rho_{AB},\sigma_{AB})\leq \epsilon$, then
\[   |S_R(\rho_{AB})-S_R(\sigma_{AB})|\leq 4\sqrt{2 T_{AB}} \log(\min\{d_A,d_B\}) -2 \sqrt{2 T_{AB}} \log(T_{AB})\]
for $\epsilon\leq \frac{1}{8e^2}$.
\end{theorem}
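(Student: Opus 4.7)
The plan is to reduce Theorem~\ref{thm1} to the ordinary Fannes inequality applied to the canonically purified states. Since $S_R(\rho_{AB})$ equals the von Neumann entropy of $\rho^{\text{can}}_{AA'} \equiv \tr_{BB'}\ket{\sqrt{\rho_{AB}}}\bra{\sqrt{\rho_{AB}}}$ (and analogously for $\sigma$), if one can propagate a trace-distance bound from the pair $(\rho_{AB},\sigma_{AB})$ to the pair $(\rho^{\text{can}}_{AA'},\sigma^{\text{can}}_{AA'})$, then the standard Fannes inequality on the $AA'$ system immediately furnishes the desired continuity bound, modulo book-keeping for the relevant dimension.

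For that propagation I would use the Powers--Stormer inequality $\|\sqrt{\rho}-\sqrt{\sigma}\|_2^2 \leq \|\rho-\sigma\|_1$. Expanding the left-hand side gives $2 - 2\tr(\sqrt{\rho_{AB}}\sqrt{\sigma_{AB}}) \leq 2T_{AB}$, and since the canonical inner product \eqref{eq:inner} identifies $\braket{\sqrt{\rho_{AB}}|\sqrt{\sigma_{AB}}} = \tr(\sqrt{\rho_{AB}}\sqrt{\sigma_{AB}}) \geq 0$, I conclude $\braket{\sqrt{\rho_{AB}}|\sqrt{\sigma_{AB}}} \geq 1 - T_{AB}$. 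Combining with the pure-state identity $T(\ket{\psi}\bra{\psi},\ket{\phi}\bra{\phi}) = \sqrt{1-|\braket{\psi|\phi}|^2}$ then gives
\[
    T\bigl(\ket{\sqrt{\rho_{AB}}}\bra{\sqrt{\rho_{AB}}},\, \ket{\sqrt{\sigma_{AB}}}\bra{\sqrt{\sigma_{AB}}}\bigr) \leq \sqrt{2 T_{AB} - T_{AB}^2} \leq \sqrt{2 T_{AB}}~,
\]
and monotonicity of trace distance under partial trace transfers this to $T(\rho^{\text{can}}_{AA'}, \sigma^{\text{can}}_{AA'}) \leq \sqrt{2 T_{AB}}$.

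Finally, I apply Fannes to $\rho^{\text{can}}_{AA'}$ and $\sigma^{\text{can}}_{AA'}$. Both canonical purifications are pure on $ABA'B'$, so their Schmidt rank across the $AA'|BB'$ cut---and hence the rank of the reduced density matrix on either side---is bounded by $(\min\{d_A, d_B\})^2$; by the identity $S(AA')_{\sqrt{\rho}} = S(BB')_{\sqrt{\rho}}$ I am free to choose the side with the smaller logarithm, so the relevant dimensional factor is $2\log\min\{d_A, d_B\}$. Plugging $T' = \sqrt{2T_{AB}}$ and $\log d_{\text{eff}} = 2\log\min\{d_A, d_B\}$ into the Fannes form $|S(\rho)-S(\sigma)| \leq 2T'\log d_{\text{eff}} - 2T'\log(2T')$ and dropping the resulting positive $\log 2$ constants yields exactly the inequality of Theorem~\ref{thm1}. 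The restriction $\epsilon \leq 1/(8e^2)$ precisely ensures $\sqrt{2T_{AB}} \leq 1/(2e)$, which is the regime in which the Fannes right-hand side is monotonically increasing in $T'$, so that replacing the (unknown) true trace distance by its upper bound $\sqrt{2T_{AB}}$ is legitimate. The main obstacle in this program is really the first step: extracting a useful quantitative closeness bound on the canonical purifications from that of the originals. Without a sharp tool like Powers--Stormer the square-root dependence is unattainable, whereas every remaining step is a routine invocation of data processing and Fannes.
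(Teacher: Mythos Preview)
Your proposal is correct and follows essentially the same route as the paper: Powers--Stormer gives $\tr(\sqrt{\rho_{AB}}\sqrt{\sigma_{AB}})\geq 1-T_{AB}$, this overlap controls the distance between the canonical purifications, monotonicity under partial trace transfers the bound to $AA'$ (or $BB'$), and Fannes on that subsystem finishes the job with the $\epsilon\leq 1/(8e^2)$ condition ensuring monotonicity of the Fannes right-hand side. The only cosmetic difference is that the paper phrases the middle step via fidelity monotonicity ($F_{AA'}\geq F_{ABA'B'}$) followed by the Fuchs--van~de~Graaf bound $T\leq\sqrt{1-F^2}$, whereas you compute the pure-state trace distance on $ABA'B'$ first and then invoke trace-distance monotonicity; these are equivalent and yield the same $T_{AA'}\leq\sqrt{2T_{AB}}$.
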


\begin{proof}
In order to prove the above statement, we first consider the fidelity between the respective purified states $\ket{\sqrt{\rho_{AB}}}_{ABA'B'}$ and $\ket{\sqrt{\sigma_{AB}}}_{ABA'B'}$, which is given by
\begin{align}\label{eq:fidelity}
    F_{ABA'B'}&=|\braket{\sqrt{\rho_{AB}}|\sqrt{\sigma_{AB}}}|~.
\end{align}
The inner product on the canonically purified states can equivalently be computed using the original density matrices by using Eqn.~\eqref{eq:inner},
\begin{align}\label{eq:overlap}
	\braket{\sqrt{\rho_{AB}}|\sqrt{\sigma_{AB}}}&=\tr(\sqrt{\rho_{AB}}\sqrt{\sigma_{AB}})\\
    &=Q_{1/2}(\rho_{AB},\sigma_{AB}),
\end{align}
where $Q_{1/2}(\rho_{AB},\sigma_{AB})$ is defined by the above equation and is the non-commutative generalization of the Bhattacharya coefficient.\footnote{Note that $Q_{1/2}$ is a real quantity, which can be proven using cyclicity of trace and the fact that density matrices are Hermitian.}
Now we can use the inequality \cite{2012arXiv1207.1197A}
\begin{align}\label{eq:ineq}
	Q_{1/2}(\rho_{AB},\sigma_{AB})\geq&  ~1-T_{AB}\\
	\implies F_{ABA'B'}=Q_{1/2}(\rho_{AB}&,\sigma_{AB}) \geq 1-T_{AB}~.
\end{align}
This is essentially equivalent to the well known Powers-Stormer inequality.
Upon tracing out $B$ and $B'$, the fidelity monotonically increases giving us
\begin{align}\label{eq:fidtrace}
    F_{AA'}\geq F_{ABA'B'}\geq1-T_{AB}~.
\end{align}
Now, we can use another well-known inequality relating fidelity to trace distance \cite{2012arXiv1207.1197A}, giving us
\begin{align}\label{eq:fidinequality}
    T(\rho_{AA'},\sigma_{AA'})\leq \sqrt{1-F_{AA'}^2}\leq \sqrt{2 T_{AB}}~, 
\end{align}
where e.g., $\rho_{AA'}$ is the density matrix obtained by tracing out $B B'$ from the purified state $\ket{\sqrt{\rho_{AB}}}$. 
The second inequality in Eqn.~\eqref{eq:fidinequality} follows from Eqn.~\eqref{eq:fidtrace}.
Thus, starting from $\rho$ and $\sigma$ being $\epsilon$-close in trace distance on subregion $AB$, we have shown that their canonical purifications are $\sqrt{\epsilon}$-close in trace distance on subregion $AA'$.
Finally, we use Fannes inequality \cite{fannes1973continuity} to show that
\begin{equation}\label{eq:fannes1}
\begin{split}
    |S_R(A:B)_{\rho}-S_R(A:B)_{\sigma}|&=|S(\rho_{AA'})-S(\sigma_{AA'})|\\
    &\leq 2 T_{AA'} \log (d_{AA'}) - 2 T_{AA'}\log(2 T_{AA'})\\
    &\leq 4\sqrt{2 T_{AB}} \log(d_A) - 2\sqrt{2 T_{AB}}\log(T_{AB}),
\end{split}
\end{equation}
where $T_{AA'}=T(\rho_{AA'},\sigma_{AA'})$.\footnote{This result can be further tightened by using the Audenaart version of the inequality \cite{audenaert2006sharp}.}
This inequality holds for $T_{AA'}\leq\frac{1}{2e}$, which is ensured by the bound $\epsilon \leq \frac{1}{8e^2}$.
The entire analysis above was perfectly symmetric between $A$ and $B$, and from Eqn.~\eqref{eq:defnSR} we also have
\begin{align}\label{eq:fannes2}
    |S_R(A:B)_{\rho}-S_R(A:B)_{\sigma}|&\leq 4\sqrt{2 T_{AB}} \log(d_B) - 2\sqrt{2 T_{AB}}\log(T_{AB}).
\end{align}
Thus, combining Eqn.~\eqref{eq:fannes1} and Eqn.~\eqref{eq:fannes2}, we get the strengthened inequality
\begin{align}\label{eq:fannes3}
    |S_R(A:B)_{\rho}-S_R(A:B)_{\sigma}|&\leq 4\sqrt{2 T_{AB}} \log(\min\{d_A,d_B\}) - 2\sqrt{2 T_{AB}}\log(T_{AB})~,
\end{align}
which proves Theorem \ref{thm1}.
\end{proof}

Note that it was crucial that we considered the {\it canonical} purification in order for e.g. $|S(\rho_{AA'}) - S(\sigma_{AA'})|$ to have such a bound.
An arbitrary purification on $ABA'B'$ can be arbitrarily far in trace distance. 
For example, different Bell pairs purify a maximally mixed density matrix and have trace distance $1$. 
The canonical purification ensures this redundancy in basis of purification doesn't play a role here. 

We also emphasize that we have not found any examples where the inequality in Theorem~(\ref{thm1}) is saturated, despite the fact that it is easy to saturate all the individual inequalities required in proving it.
Our preliminary numerical analysis suggests that $|S_R(\rho)-S_R(\sigma)|\sim O(\epsilon)$ in all the examples that we tested, instead of the $O(\sqrt{\epsilon})$ allowed by Theorem~\ref{thm1}.
This leaves open the possibility that a tighter bound exists.
We haven't pursued a systematic numerical analysis of the above, but it would be interesting to probe this question in future.

\subsection*{Implication for AdS/CFT}

Theorem \ref{thm1} renders it impossible for two states $\rho_{AB},\sigma_{AB}$ to have reflected entropy different at $\mathcal{O}(\frac{1}{G_N})$ unless $\sqrt{T_{AB}} \log d_{AB}$ is also $\mathcal{O}(\frac{1}{G_N})$.
In a holographic CFT, $\log d_{AB} \sim \mathcal{O}(\frac{1}{G_N})$. 
So, the trace distance would need to be non-zero at leading order, $T_{AB} \sim \mathcal{O}(1)$.

However, this is not consistent with the MBC. 
Suppose $\sigma_{ABC}$ represents the density matrix corresponding to the MBC state, and $\rho_{ABC}$ represents the actual density matrix of a holographic CFT.
As we argued above, the MBC requires they should be close in the sense that $T_{ABC} \equiv T(\rho_{ABC},\sigma_{ABC}) \sim \mathcal{O}(G_N)$.
Trace distances decrease under tracing out subregions, so $T_{AB} \le T_{ABC} \sim \mathcal{O}(G_N)$.  
Therefore, $T_{AB}$ is too small for $\sigma$ and $\rho$ to have different reflected entropy at $\mathcal{O}(\frac{1}{G_N})$. 

Said differently, Theorem \ref{thm1} states that if $T_{ABC}$ is indeed $\mathcal{O}(G_N)$, then
\begin{align}\label{eq:difference}
    |S_R(\rho_{AB})-S_R(\sigma_{AB}))|&=|2 EW(A:B)-I(A:B)| \lesssim \mathcal{O}\left(\frac{1}{\sqrt{G_N}}\right)~,
\end{align}
where we have used the $S_R$ conjecture in the equality and Theorem \ref{thm1} in the inequality.
This contradicts the fact that there exist examples in AdS/CFT where $|2 EW(A:B)-I(A:B)|\sim \mathcal{O}(\frac{1}{G_N})$, e.g. the situation in Figure \ref{fig:SRI}.
Thus, we see that even small corrections to the MBC state are incapable of making it compatible with the $S_R$ conjecture.

\subsection{Tensor Networks}\label{sec:TN}

We now resolve a conundrum that our results seem to create in tensor networks.
Tensor networks have provided good toy models of holography, illustrating properties such as subregion duality and the RT formula.
In particular, a network made of perfect tensors can be shown to satisfy the RT formula under certain reasonable assumptions \cite{Pastawski:2015qua}.
Much more generally, it was shown that networks made from Haar random tensors also satisfy the RT formula \cite{Hayden:2016cfa}.

It was also emphasized in \cite{Hayden:2016cfa} that Haar randomness was overkill, and the RT formula followed simply from choosing random tensors from a 2-design ensemble, i.e. one that agrees with the first two moments of the Haar measure.
A particularly nice choice of 2-design ensemble is provided by stabilizer tensors of dimension $D=p^N$ in the limit of large $N$, where $p$ is a prime number.
Such random stabilizer tensor networks (RSTN) were further studied in \cite{Nezami:2016zni}, where it was proven that their states always take the form
\begin{align}\label{eq:RSTNstate}
\ket{\psi}_{ABC}=U_A^{\dagger} U_B^{\dagger} U_C^{\dagger} \ket{\phi^+}_{A_1\,B_1}^{\otimes n_1} \ket{\phi^+}_{A_2\,C_1}^{\otimes n_2} \ket{\phi^+}_{B_2\,C_2}^{\otimes n_3} \ket{\text{GHZ}}_{A_3\,B_3\,C_3}^{\otimes n_g}
\end{align}
where $\ket{\phi^+}$ denotes a $p$-dimensional Bell pair shared between the two parties, e.g.
\begin{align}
	\ket{\phi^+}_{A_1 B_1} \equiv \frac{1}{\sqrt{p}} \sum_{i=0}^{p-1}\ket{i}_{A_1} \ket{i}_{B_1}~,
\end{align}
and $\ket{\text{GHZ}}$ denotes a shared $p$-dimensional GHZ state, 
\begin{align}
	\ket{\text{GHZ}}_{A_3 B_3 C_3} = \frac{1}{\sqrt{p}} \sum_{i=0}^{p-1} \ket{i}_{A_3}\ket{i}_{B_3}\ket{i}_{C_3}~.
\end{align}
Neither of these states scale with $N$; they are elementary units of entanglement.
The exponents, however, can indeed have $N$-dependence.
That $N$-dependence was discovered in \cite{Nezami:2016zni}, where it was shown that in the large $N$ limit, $n_1$, $n_2$ and $n_3$ grow linearly with $N$, whereas $n_g$ remains $\mathcal{O}(1)$.
Note that $N$ here is analogous to $\frac{1}{G_N}$ in AdS/CFT.

This is exactly an MBC state like that in Eqn.~\eqref{eq:bipartite}.
Our result in Section~\ref{sec:SR} shows that this is incompatible with the conjecture $S_R = 2 EW$.
This is startling at first: the $S_R$ conjecture was motivated by the RT formula, which RSTN satisfy.
So, naively, we would expect RSTN to satisfy $S_R = 2 EW$.

\begin{figure}[t]
\begin{center}
  \includegraphics[scale=0.2]{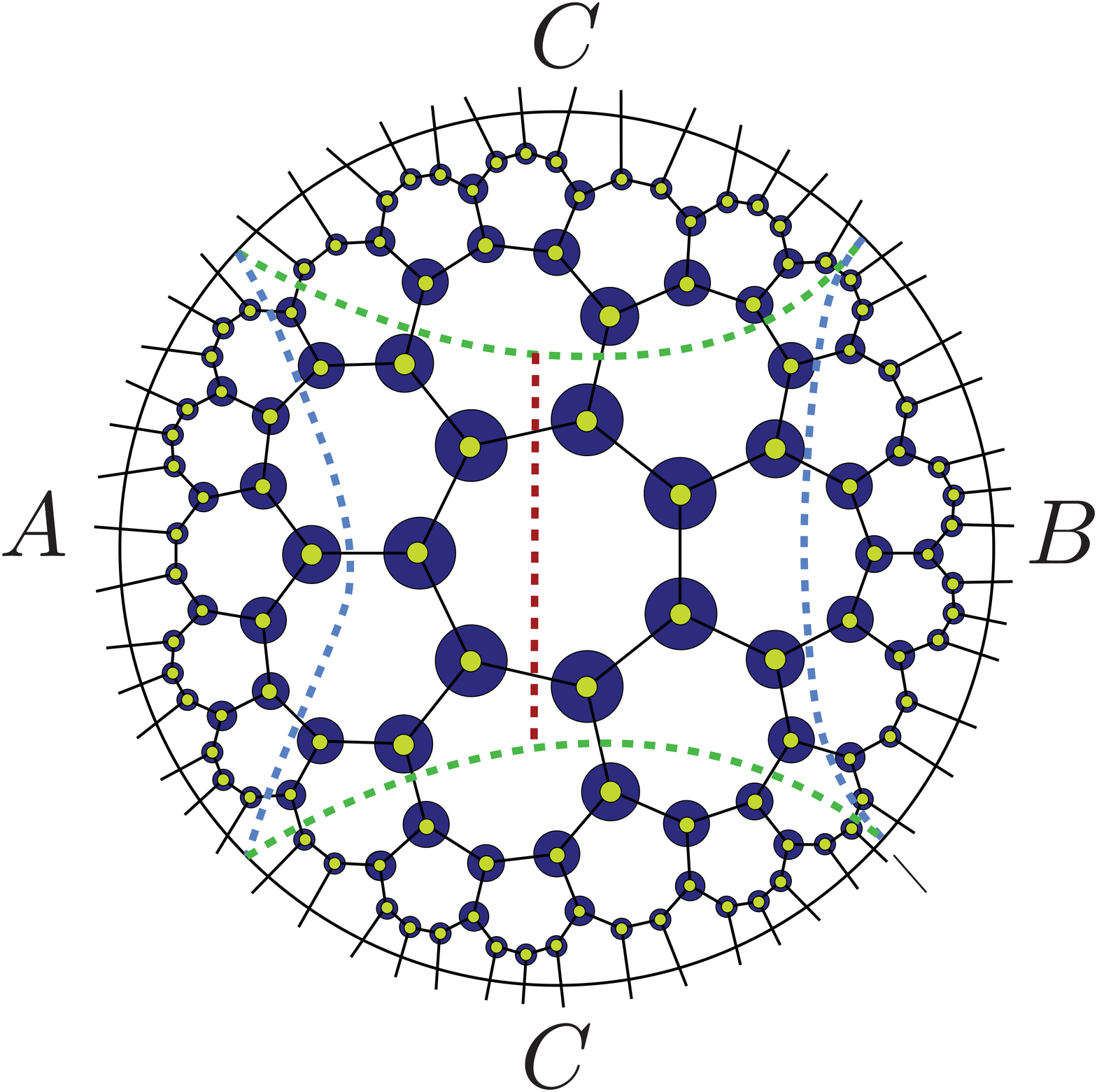} 
\end{center}
\caption{A random stabilizer tensor network with subregion $AB$ in the connected phase. The green dotted line represents the RT surface for subregion $AB$, while the yellow dotted lines represent the RT surface of $A$ and $B$ respectively. The red dotted line represents $EW(A:B)$.}
\label{fig:TNcalculation}
\end{figure}

We now compute $S_R(A:B)$ in RSTN to explain why they, in fact, do not.
The upshot will be that while the canonical purification of a state $\rho_{AB}$ is indeed given by a doubled version of its entanglement wedge (just like in AdS/CFT), the doubled entanglement wedge network does not itself satisfy RT in the naive way! 

Consider the tensor network in Figure \ref{fig:TNcalculation}.
In order to restrict to $\rho_{AB}$, we can use the fact that there is an isometry from the boundary legs of subregion $C$ to the in-plane legs cut by the RT surface of subregion $AB$.
This gives us an effective tensor network restricted to the entanglement wedge of $AB$.
In order to compute the density matrix $\rho_{AB}$, we can glue together two copies of this tensor network as in Figure \ref{fig:copies}.
The density matrix $\rho_{AB}$ has a flat entanglement spectrum as can be seen from Eqn.~\eqref{eq:RSTNstate}.
Thus, it can be shown that the operator $\sqrt{\rho_{AB}}$, and hence the canonically purified state $\ket{\sqrt{\rho_{AB}}}_{ABA'B'}$ is represented by the same doubled tensor network TN' depicted in Figure \ref{fig:copies} up to normalization.

\begin{figure}[t]
\vcenteredhbox{\includegraphics[scale=0.2]{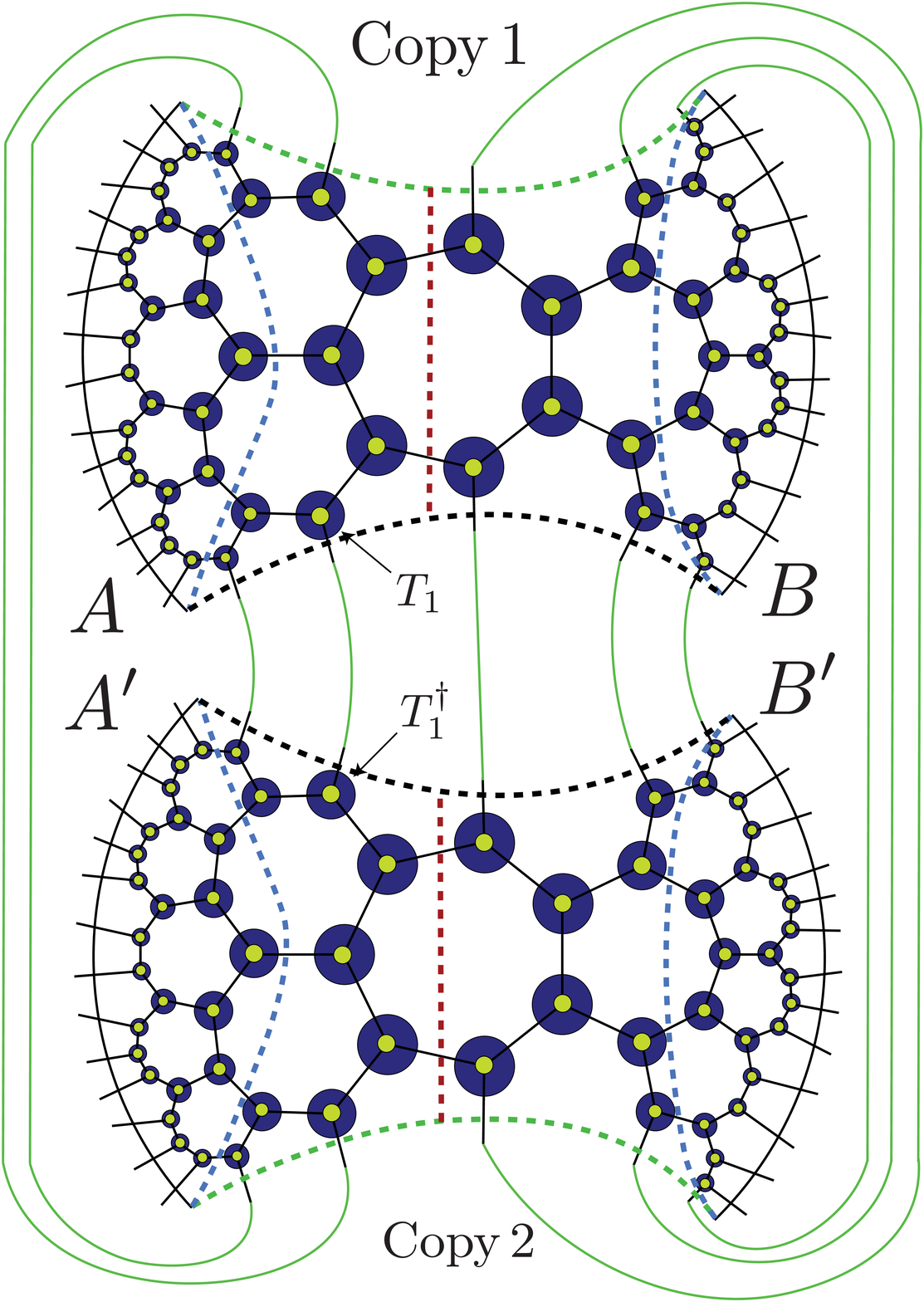} } \vcenteredhbox{\includegraphics[scale=0.2]{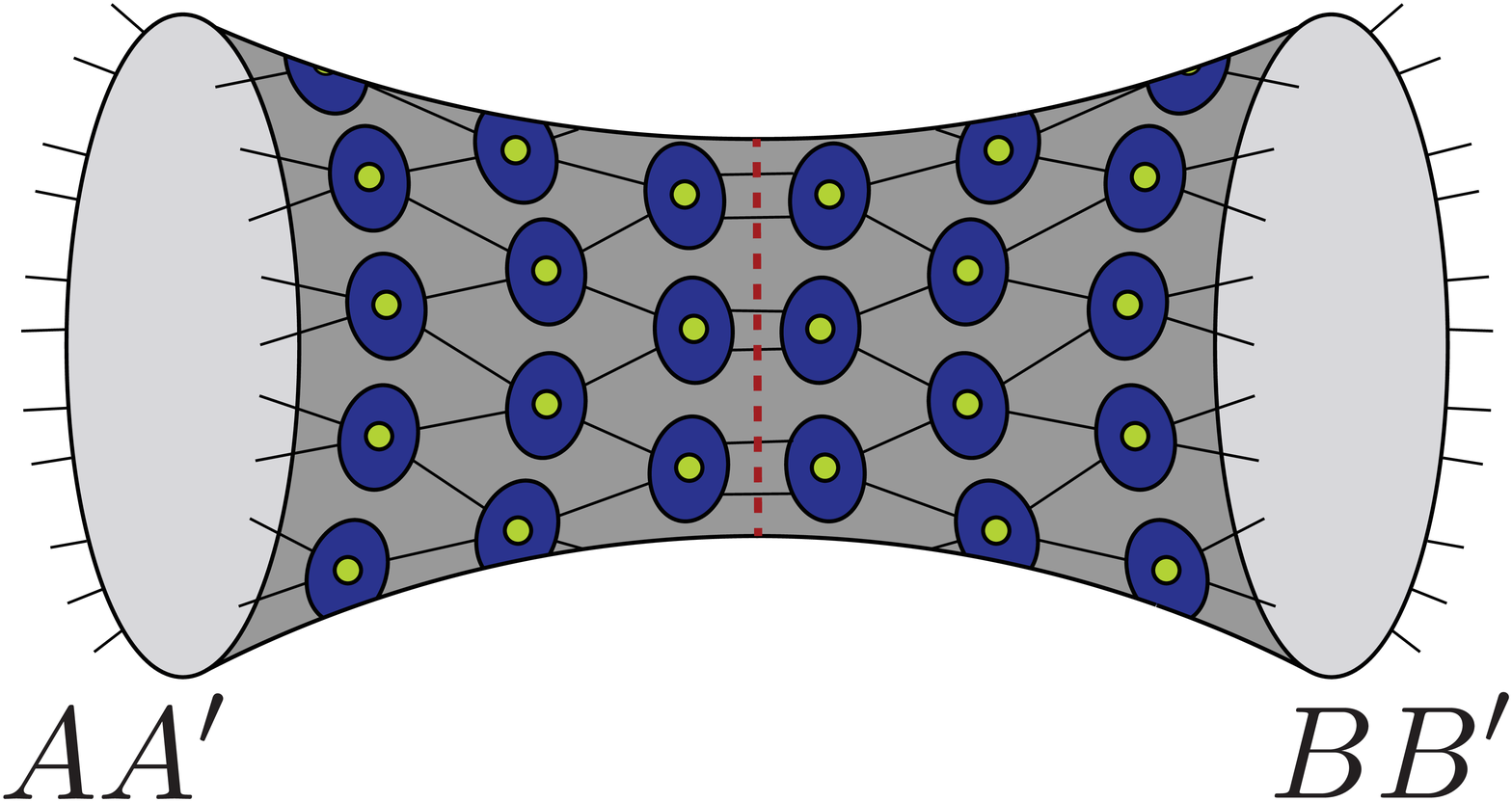}}
\caption{(Left): A reduced tensor network corresponding to the entanglement wedge of $AB$ is obtained by using the isometry from the boundary legs of subregion $C$ to the legs at the RT surface (denoted black and green dotted lines). Two copies of this RSTN glued as shown prepare the canonically purified state. We call this doubled network TN'.\newline
(Right): Geometrically, this resembles the AdS/CFT construction discussed in \protect\cite{Engelhardt:2017aux,Engelhardt:2018kcs,Dutta:2019gen}. If the RT formula holds, then $S_R(A:B)=2 EW(A:B)$.}
\label{fig:copies}
\end{figure}

TN' geometrically resembles the bulk saddle geometry obtained in the holographic construction discussed in \cite{Dutta:2019gen}.
If TN' were to satisfy the RT formula, one would indeed be led to the claim that the entropy of subregion $AA'$ is computed by the minimal cross section in this effective tensor network.
The RT surface in TN' is indeed just twice the original entanglement wedge cross section, and thus, we would have the conjectured result, $S_R(A:B)=2 EW(A:B)$.

However, this naive argument doesn't carry through because TN' has certain special properties that distinguish it from a completely random stabilizer tensor network. 
Importantly, the set of tensors used in Copy 2 in TN' are precisely correlated with the tensors in Copy 1.
E.g., in Figure \ref{fig:copies}, one can see $T_1^{\dagger}$ and $T_1$ placed at equivalent positions in either copy. 
The derivation of the RT formula depended on having completely uncorrelated tensors on both copies of the TN.

\begin{figure}[t]
\begin{center}
  \includegraphics[scale=0.3]{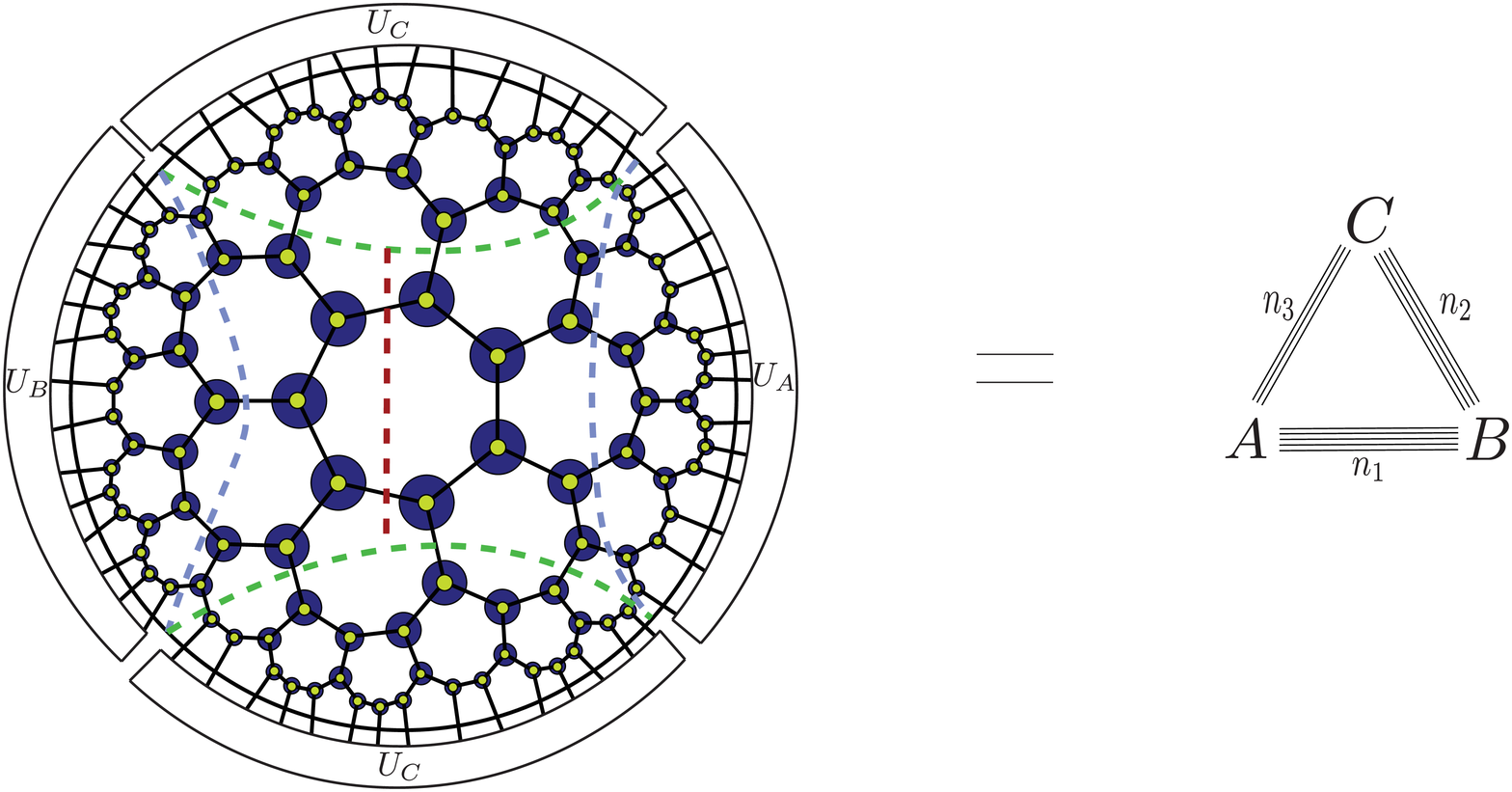}~.
\end{center}
\begin{center}
  \includegraphics[scale=0.4]{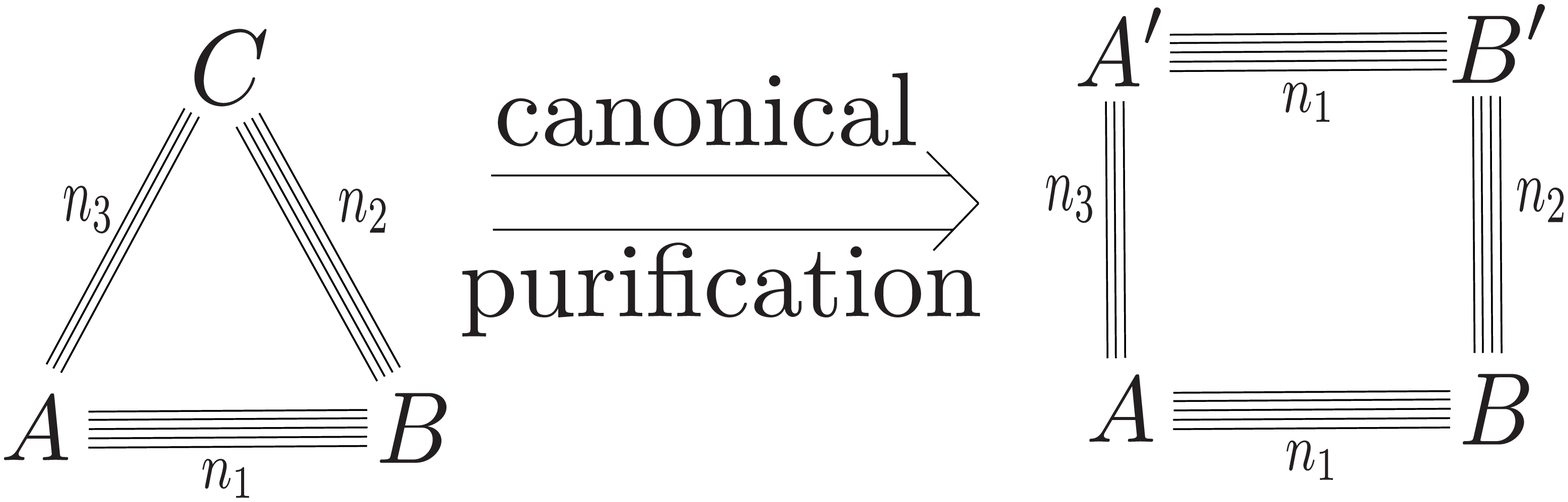} 
\end{center}
\caption{After applying local unitaries, the RSTN drastically simplifies to a combination of Bell pairs shared by the three parties. The Bell pairs then lead to a simple canonically purified state.}
\label{fig:simple}
\end{figure}

That this correlation spoils the RT formula is made manifest by the form of the state $\ket{\psi}_{ABC}$ in Eqn.~\eqref{eq:RSTNstate}. 
After applying the local unitaries, which depend sensitively on the choice
of tensors in the network, one gets a drastically simplified network
as seen in Figure \ref{fig:simple}.
The canonical purification then takes a simple form, and computing $S(AA')$ in this simple network gives us
\begin{align}\label{eq:RTSR}
    S_R(A:B)=2 n_1 \log p = I(A:B)~.
\end{align}
%Thus, we see that the RSTN is incompatible with AdS/CFT due to its special properties.
We see that RSTN do not satisfy $S_R = 2 EW$ because having correlated tensors precludes the application of the RT formula. 

Indeed, the RT formula in the original RSTN only required the tensors be 2-designs. 
We expect that having the tensors agree with even higher moments of the Haar measure is sufficient for the network to continue to satisfy the RT formula, even when the network is built out of many copies of itself. 
If true, then the random tensor networks of \cite{Hayden:2016cfa} should satisfy the $S_R$ conjecture, and highly random tensors -- rather than e.g. 2-designs -- would be better models of holography.
This is the subject of ongoing work \cite{new}.

%----------------------------------------------------------------

\section{$E_P$ Conjecture vs Bipartite Entanglement}\label{sec:EP}
There is a tension between the $E_P$ conjecture and the MBC that is qualitatively the same as that between the $S_R$ conjecture and the MBC.
Given a density matrix $\rho_{AB}$, one can define its entanglement of purification as \cite{doi:10.1063/1.1498001} 
\begin{align}\label{eq:defnEP}
    E_P(A:B)=\min_{\ket{\psi}} S(AA')~,
\end{align}
where the minimization is over all states $\ket{\psi}_{ABA'B'}$ that are pure and consistent with the reduced density matrix $\rho_{AB}$.
In \cite{Takayanagi:2017knl,Nguyen:2017yqw}, it was conjectured that in AdS/CFT
\begin{align}\label{eq:EPconj}
    E_P(A:B)=EW(A:B).
\end{align}
This conjecture was motivated by the surface-state correspondence, wherein similar to tensor networks, a holographic state can be defined on any convex surface in the bulk \cite{Miyaji:2015yva,Nomura:2018kji,Chen:2019mis,Prudenziati:2019fev}.
Further, since the minimization over all possible purifications is a computationally intractable problem, it was assumed that minimizing over geometric purifications was sufficient (for discussions of this point, see \cite{Cheng:2019aqf}).
This conjecture, along with its multipartite generalizations, has received a lot of attention recently, although proofs or related computations have generally required various strong assumptions \cite{Guo:2019pfl,BabaeiVelni:2019pkw,Bao:2018fso,Bao:2018gck,Bao:2019wcf,Umemoto:2018jpc,Bhattacharyya:2018sbw,Caputa:2018xuf,Bhattacharyya:2019tsi,Harper:2019lff,Hirai:2018jwy}.

To argue that the $E_P$ conjecture is incompatible with the MBC, we review results that are essentially known in the literature.
This distinguishes this argument from the one in Section~\ref{sec:SR}, which involved our Theorem \ref{thm1} that was completely new.

In order to compute $E_P(A:B)$ in the MBC state, we first note that $E_P$ is a sub-additive quantity under tensor products \cite{bagchi2015monogamy}.
In fact, additivity holds for pure states, $\rho_{AB}=\ket{\psi}_{AB}\bra{\psi}_{AB}$, and completely decoupled states, $\rho_{AB}=\rho_{A}\otimes\rho_{B}$, but not in general \cite{chen2012non}.
Using this property, we find for the MBC state
\begin{align}\label{eq:subadditive}
    E_P(\rho_{AB})&\leq E_P(\rho_{A_1\,B_1}) + E_P(\rho_{A_2})+ E_P(\rho_{B_2}) +E_P(\rho_{A_3\,B_3}).
\end{align}
The first term on the right hand side gives $E_P(\rho_{A_1 B_1})=S(\rho_{A_1})=\frac{1}{2}I(A_1:B_1)$, because $\rho_{A_1 B_1}$ is a pure state. 
The second and third terms involve only one of either $A$ or $B$ and thus give $E_P(\rho_{A_2})=E_P(\rho_{B_2})=0$. 
The fourth term can be bounded using the known inequalities for $E_P$ to obtain
\begin{align}\label{eq:EPbounds}
    0\leq& E_P(\rho_{A_3\,B_3})\leq 2\min\{S(\rho_{A_3}),S(\rho_{B_3})\},
\end{align}
and thus, $E_P(\rho_{A_3\,B_3})$ is an $O(1)$ positive quantity.
Putting these results together and using known inequalities, we find that
\begin{align}
    \frac{1}{2}I(A:B) &\leq E_P(\rho_{AB})\leq \frac{1}{2}I(A:B)+O(1).
\end{align}
Thus, for $G_N \to 0$, we obtain $E_P(A:B) \approx \frac{1}{2}I(A:B)$, where $``\approx''$ denotes matching at $\mathcal{O}(\frac{1}{G_N})$. 
Similar to the result in Section~\ref{sec:SRMBC}, we find that the MBC state is incompatible with the $E_P$ conjecture.

\subsection*{Small Corrections}

One might again worry that small corrections to the MBC state might make it compatible with the $E_P$ conjecture.
However, this too can be ruled out by the following theorem.

\begin{theorem}[Continuity of the Entanglement of Purification]\label{thm2}
Given two density matrices $\rho_{AB}$ and $\sigma_{AB}$ defined on a Hilbert space $\mathcal{H}=\mathcal{H_A}\otimes \mathcal{H_B}$ of dimension $d=d_A\,d_B$, such that $T_{AB}=T(\rho_{AB},\sigma_{AB})\leq \epsilon$, then
\[   |E_P(\rho_{AB})-E_P(\sigma_{AB})|\leq 40\sqrt{T_{AB}} \log(d) -4 \sqrt{T_{AB}} \log(4\sqrt{T_{AB}})\]
for $\epsilon\leq \frac{1}{4e^2}$.
\end{theorem}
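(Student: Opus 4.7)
The plan is to mimic the structure of the proof of Theorem \ref{thm1}, but to replace the canonical purification (which is unavailable here, since $E_P$ is defined via a minimization) with a purification supplied by Uhlmann's theorem. The key observation is that, although Uhlmann only produces \emph{some} purification of $\sigma_{AB}$ rather than the optimal one, that purification still furnishes an upper bound on $E_P(\sigma_{AB})$, which is all that is needed for a one-sided continuity inequality; symmetry then closes the argument.

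In more detail, I would first fix a purification $\ket{\psi^\rho_*}_{ABA'B'}$ of $\rho_{AB}$ that (nearly) achieves the minimum in Eqn.~\eqref{eq:defnEP}, so that $E_P(\rho_{AB}) = S(\rho^*_{AA'})$. By the standard result that $E_P$ is optimized by purifications with $\dim A', \dim B'$ bounded by $d$, the total dimension $d_{AA'}$ is at most a fixed polynomial in $d$. Uhlmann's theorem then produces a purification $\ket{\psi^\sigma}_{ABA'B'}$ of $\sigma_{AB}$ on the same Hilbert space with $|\braket{\psi^\rho_*|\psi^\sigma}| = F(\rho_{AB},\sigma_{AB})$. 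Using the Fuchs--van de Graaf inequality $F(\rho_{AB},\sigma_{AB}) \geq 1 - T_{AB}$ and the trace distance for pure states $T(\ket{\psi^\rho_*},\ket{\psi^\sigma}) = \sqrt{1-F^2}$, monotonicity of trace distance under partial trace gives
\[
T(\rho^*_{AA'},\sigma_{AA'}) \leq \sqrt{2 T_{AB}},
\]
exactly paralleling Eqn.~\eqref{eq:fidinequality} in the proof of Theorem \ref{thm1}.

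The final step is to invoke Fannes' inequality on $S(\rho^*_{AA'})$ vs.\ $S(\sigma_{AA'})$, and then argue as follows: because $\ket{\psi^\sigma}$ is a valid purification of $\sigma_{AB}$, the definition of $E_P$ gives $E_P(\sigma_{AB}) \leq S(\sigma_{AA'})$, hence
\[
E_P(\sigma_{AB}) - E_P(\rho_{AB}) \leq S(\sigma_{AA'}) - S(\rho^*_{AA'}) \leq \Delta,
\]
where $\Delta$ is the Fannes bound with $T_{AA'} \leq \sqrt{2T_{AB}}$ and $\log d_{AA'}$ controlled by $\log d$. Swapping the roles of $\rho$ and $\sigma$ yields the reverse direction, so one obtains $|E_P(\rho_{AB})-E_P(\sigma_{AB})| \leq \Delta$. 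Rolling all the accumulated factors of $2$, $\sqrt{2}$, and the dimensional bound $d_{AA'} \lesssim d^2$ into one expression produces the stated constants and threshold $\epsilon \leq \tfrac{1}{4e^2}$ (chosen so that $2T_{AA'}$ is in the range where the quoted form of Fannes applies).

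The main obstacle, and the place where the argument differs qualitatively from Theorem \ref{thm1}, is that the purification of $\sigma_{AB}$ handed to us by Uhlmann is generically suboptimal for $E_P(\sigma_{AB})$: we have no control over $S(\sigma_{AA'})$ from below in terms of $E_P(\sigma_{AB})$. This asymmetry is precisely why a direct two-sided comparison fails and why the symmetric argument (repeating the construction starting from the optimizer for $\sigma$) is needed. A secondary nuisance is justifying the dimensional cutoff $\dim A', \dim B' \leq d$; this relies on the known fact that the infimum in the definition of $E_P$ is always attained (or approached) within this dimension, which is essential for applying Fannes with a $\log d$ rather than an unbounded factor.
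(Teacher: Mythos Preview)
Your argument is correct, but it differs from the paper's in presentation. The paper does not rederive the continuity of $E_P$; it simply invokes Theorem~1 of \cite{doi:10.1063/1.1498001}, which already states
\[
|E_P(\rho_{AB})-E_P(\sigma_{AB})|\leq 20\, D(\rho_{AB},\sigma_{AB})\log d - D(\rho_{AB},\sigma_{AB})\log D(\rho_{AB},\sigma_{AB})
\]
in terms of the Bures distance $D=2\sqrt{1-F}$, and then uses $F_{AB}\geq 1-T_{AB}$ to convert to trace distance, producing the stated constants in one line. What you have written is essentially the \emph{content} of that cited theorem, unpacked: the Uhlmann step replacing the canonical purification, the one-sided inequality from suboptimality of the Uhlmann purification, the symmetrization, and the dimensional bound on the purifying system are exactly the ingredients of the original Terhal--Horodecki--Leung--DiVincenzo proof. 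So your route is not genuinely different in substance, only in that you reconstruct the lemma rather than cite it; the payoff is a self-contained argument that visibly parallels Theorem~\ref{thm1}, at the cost of having to manage the dimensional cutoff and the attainment of the infimum yourself (which, as you note, are handled in that same reference).
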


\begin{proof}
This proof essentially follows from Theorem 1 of \cite{doi:10.1063/1.1498001}, where it was shown that
\begin{align}
    |E_P(\rho_{AB})-E_P(\sigma_{AB})|&\leq 20 D(\rho_{AB},\sigma_{AB})\log(d) - D(\rho_{AB},\sigma_{AB})\log(D(\rho_{AB},\sigma_{AB})) 
\end{align}
where $D(\rho_{AB},\sigma_{AB})=2\sqrt{1-F_{AB}}$ is the Bures distance.
Using the inequality
\begin{align}
    1-T_{AB}&\leq F_{AB}\implies D(\rho_{AB},\sigma_{AB})\leq 2\sqrt{T_{AB}},
\end{align}
we obtain the desired result.
\end{proof}

Using Theorem \ref{thm2}, we conclude that a slightly-corrected MBC state is still incompatible with the $E_P$ conjecture, by a similar argument to the one made in Section~\ref{sec:small}.

\subsection*{Tensor Networks}
$E_P$ is a difficult quantity to compute in general, and hence it is much harder to understand the tensor network story analogous to that in Section~\ref{sec:TN}.
However, in the case of RSTNs, the simplified network (obtained by applying local unitaries, as in Figure \ref{fig:simple}) has an $E_P$ that can easily be calculated to give $\frac{1}{2}I(A:B)$ at leading order in $G_N$.

It is important to note that the $E_P$ conjecture was originally motivated by restricting to geometric purifications and computing the optimal RT surface anchored to the entanglement wedge.
An important insight we gain here is that non-geometric tensor networks like the simplified network were crucial for the minimization in computing $E_P$, at least for RSTNs.
It would be interesting to understand if this is more generally true \cite{Cheng:2019aqf}.

\section{Discussion}\label{sec:discuss}

We have provided two pieces of evidence that suggest that holographic states require a large amount of tripartite entanglement:
Having little tripartite entanglement is inconsistent with both the strongly-supported conjectures that $S_R = 2 EW$ and $E_P = EW$. 
We now focus on some of the caveats, implications, and interesting future directions stemming from this work.

\subsection*{Trace distance}

We have demonstrated that holographic CFT states cannot be close in trace distance to the MBC state.
It is still possible that they are ``close'' in another sense.
Being close in trace distance is a strong criterion that ensures closeness in all observable quantities and is a standard measure of similarity of states in quantum information.
If we allow weaker conditions of closeness on the state, such as closeness in a restricted class of observable quantities, it might be possible to make the MBC state consistent with the $S_R$ and $E_P$ conjectures.
However, we do not see any evidence for other quantities that may be reproduced by assuming an MBC state, and in particular, measures of multipartite entanglement are in conflict with the conjectured state.
It would be interesting to see if other weaker forms of closeness can lead to a version of the MBC that is both useful and compatible with the other two conjectures.

\subsection*{Limitation on Tensor Networks}

This analysis also illuminates limitations of tensor networks as toy models of holography.
Since the von Neumann entropy is a reasonably coarse grained quantity, even 2-design tensor networks such as random stabilizer tensor networks were able to reproduce the RT formula.
However, stabilizers are a very special class of tensors, and are generically far in trace distance from Haar random tensors (owing to the fact that there are many more Haar random tensors than stabilizers).
Hence, properties from any such tensor networks should be considered carefully, because they may not agree with actual holographic answers.
%It is interesting to analyze whether Haar random tensors can significantly affect the conflict we have demonstrated here between tensor networks and holography.

In fact, specific tensor network models have previously been used to model ``mostly bipartite" entanglement that arises in certain regions of moduli space of multiboundary wormholes \cite{Marolf:2015vma,Peach:2017npp}.
It would be interesting to explore whether more refined tensor network models can capture the right form of multipartite entanglement employed by holographic states.

It is interesting to note that the tensor network in \cite{Bao:2019fpq,Bao:2018pvs} is close in trace distance to the holographic state, by construction.
Certain classes of their tensor networks require the $E_P=EW$ conjecture, so it would be interesting to repeat the above analysis in their case.
%It is important to note that the holographic state must be close in trace distance to just a single instance of a random tensor network (RTN), and it so happens that entropies work out the same way for all the RTNs in the ensemble nevertheless.
%This is not guaranteed in general and should be taken into account if RTNs are to be used to compute observables for example.

\subsection*{Entanglement measures}

As we saw in our analysis in Section~\ref{sec:SR}, the reflected entropy $S_R(A:B)$ is a much more fine-grained entanglement measure than individual entanglement entropies, for mixed density matrices.
This quantity is very naturally motivated from holography and hasn't yet been studied in the quantum information literature.
In this sense, it is similar to the refined Renyi entropies which is also a very natural quantity in holography, but hasn't been analyzed in quantum information \cite{Dong:2016fnf,Akers:2018fow,Dong:2018seb,Bao:2019aol}.
It would be interesting to understand its properties and generic behaviour in quantum systems.

There is, in fact, a zoo of quantities that measure multipartite entanglement and there is not a clear understanding of a canonically best choice.
Owing to this fact, there have been many proposals in holography for such quantities including, among many others, the entanglement negativity and odd entropy \cite{Tamaoka:2018ned,Kusuki:2019zsp,Levin:2019krg,Umemoto:2019jlz}.
Similarly, higher party versions of the reflected entropy have also been proposed, motivated by AdS/CFT \cite{Marolf:2019zoo,Bao:2019zqc, Chu:2019etd}.
It would be interesting to understand each of these quantities in the context of holography, or even toy models such as tensor networks.
If the program of understanding quantum gravity by understanding quantum information is to progress, it is crucial that we obtain a more refined understanding of multipartite entanglement measures.

\subsection*{Applications for reflected entropy continuity}

Our new bound in Theorem \ref{thm1} has many interesting applications.
For example, it might be useful in proving inequalities about $S_R$ that were conjectured in \cite{Dutta:2019gen}.
Indeed, those inequalities might be easier to prove for e.g. the fixed-area states defined in \cite{Akers:2018fow,Dong:2018seb}. 
Holographic CFT states are generally close in trace distance to one fixed-area state.
So, bounds on the reflected entropy of one translate to bounds on the reflected entropy of the other.
It would be interesting to find other uses for this theorem. 

\subsection*{GHZ isn't enough}

While we have demonstrated that tripartite entanglement is necessary for the $S_R$ and $E_P$ conjectures, we have not emphasized what type of tripartite entanglement is required.
In fact, GHZ entanglement -- even a lot of it -- does not help. 
One can show that GHZ entanglement also satisfies $S_R(A:B)=I(A:B)$.
(Note that this problem is also not resolved by adding superselection sectors, similar to the $\alpha$ blocks in operator-algebra quantum error correction \cite{Harlow:2016vwg,Akers:2018fow,Dong:2018seb,Dong:2019piw}.
These results strongly suggest that the ``stabilizerness" of holographic states is very low, which will be discussed in upcoming work \cite{brian}.\footnote{We thank Brian Swingle for discussions related to this.})

Beyond this, there is little we can say.
It is difficult to pinpoint what type of entanglement must be present, because there are many inequivalent forms of tripartite entanglement, and the classification is not well understood in general.
In the case of three qubits, there are just two inequivalent forms of entanglement: GHZ and W \cite{dur2000three}.
For $A,B$ two of the three qubits in a W-state, $S_R(A:B)= 1.49 \log 2$ while $I(A:B) = 0.92 \log 2$, and therefore W-entanglement might be used to alleviate the gap between the MBC (and RSTNs) and holography.
Similarly, numerical analyses suggest that $E_P(A:B)\neq \frac{1}{2}I(A:B)$ for such states \cite{doi:10.1063/1.1498001,chen2012non}.
It would be interesting understand better the particular kind of tripartite entanglement that is crucial for holography.

\section*{Acknowledgements}
We thank Ning Bao, Raphael Bousso, Ven Chandrasekaran, Netta Engelhardt, Tom Faulkner, Matthew Headrick, Arvin Shahbazi Moghaddam, Yasunori Nomura and Brian Swingle for helpful discussions and comments.
C.A. is supported by the US Department of Energy grants DE-SC0018944 and DE-SC0019127, and also
the Simons foundation as a member of the It from Qubit collaboration. 
This work was supported in part by the Berkeley Center for Theoretical Physics; by the Department of Energy, Office of Science, Office of High Energy Physics under QuantISED Award DE-SC0019380 and under contract DE-AC02-05CH11231; and by the National Science Foundation under grant PHY1820912.

\bibliographystyle{JHEP}
\bibliography{mybibliography}

\providecommand{\href}[2]{#2}\begingroup\raggedright\begin{thebibliography}{10}

\bibitem{VanRaamsdonk:2010pw}
M.~Van~Raamsdonk, \emph{{Building up spacetime with quantum entanglement}},
  \href{https://doi.org/10.1007/s10714-010-1034-0,
  10.1142/S0218271810018529}{\emph{Gen. Rel. Grav.} {\bfseries 42} (2010) 2323}
  [\href{https://arxiv.org/abs/1005.3035}{{\ttfamily 1005.3035}}].

\bibitem{Maldacena:2013xja}
J.~Maldacena and L.~Susskind, \emph{{Cool horizons for entangled black holes}},
  \href{https://doi.org/10.1002/prop.201300020}{\emph{Fortsch. Phys.}
  {\bfseries 61} (2013) 781} [\href{https://arxiv.org/abs/1306.0533}{{\ttfamily
  1306.0533}}].

\bibitem{Ryu:2006ef}
S.~Ryu and T.~Takayanagi, \emph{{Aspects of Holographic Entanglement Entropy}},
  \href{https://doi.org/10.1088/1126-6708/2006/08/045}{\emph{JHEP} {\bfseries
  08} (2006) 045} [\href{https://arxiv.org/abs/hep-th/0605073}{{\ttfamily
  hep-th/0605073}}].

\bibitem{Ryu:2006bv}
S.~Ryu and T.~Takayanagi, \emph{{Holographic derivation of entanglement entropy
  from AdS/CFT}},
  \href{https://doi.org/10.1103/PhysRevLett.96.181602}{\emph{Phys. Rev. Lett.}
  {\bfseries 96} (2006) 181602}
  [\href{https://arxiv.org/abs/hep-th/0603001}{{\ttfamily hep-th/0603001}}].

\bibitem{Bao:2015bfa}
N.~Bao, S.~Nezami, H.~Ooguri, B.~Stoica, J.~Sully and M.~Walter, \emph{{The
  Holographic Entropy Cone}},
  \href{https://doi.org/10.1007/JHEP09(2015)130}{\emph{JHEP} {\bfseries 09}
  (2015) 130} [\href{https://arxiv.org/abs/1505.07839}{{\ttfamily
  1505.07839}}].

\bibitem{Balasubramanian:2014hda}
V.~Balasubramanian, P.~Hayden, A.~Maloney, D.~Marolf and S.~F. Ross,
  \emph{{Multiboundary Wormholes and Holographic Entanglement}},
  \href{https://doi.org/10.1088/0264-9381/31/18/185015}{\emph{Class. Quant.
  Grav.} {\bfseries 31} (2014) 185015}
  [\href{https://arxiv.org/abs/1406.2663}{{\ttfamily 1406.2663}}].

\bibitem{Cui:2018dyq}
S.~X. Cui, P.~Hayden, T.~He, M.~Headrick, B.~Stoica and M.~Walter, \emph{{Bit
  Threads and Holographic Monogamy}},
  \href{https://arxiv.org/abs/1808.05234}{{\ttfamily 1808.05234}}.

\bibitem{Hayden:2016cfa}
P.~Hayden, S.~Nezami, X.-L. Qi, N.~Thomas, M.~Walter and Z.~Yang,
  \emph{{Holographic duality from random tensor networks}},
  \href{https://doi.org/10.1007/JHEP11(2016)009}{\emph{JHEP} {\bfseries 11}
  (2016) 009} [\href{https://arxiv.org/abs/1601.01694}{{\ttfamily
  1601.01694}}].

\bibitem{Nezami:2016zni}
S.~Nezami and M.~Walter, \emph{{Multipartite Entanglement in Stabilizer Tensor
  Networks}},  \href{https://arxiv.org/abs/1608.02595}{{\ttfamily 1608.02595}}.

\bibitem{Dutta:2019gen}
S.~Dutta and T.~Faulkner, \emph{{A canonical purification for the entanglement
  wedge cross-section}},  \href{https://arxiv.org/abs/1905.00577}{{\ttfamily
  1905.00577}}.

\bibitem{Takayanagi:2017knl}
T.~Takayanagi and K.~Umemoto, \emph{{Entanglement of purification through
  holographic duality}},
  \href{https://doi.org/10.1038/s41567-018-0075-2}{\emph{Nature Phys.}
  {\bfseries 14} (2018) 573}
  [\href{https://arxiv.org/abs/1708.09393}{{\ttfamily 1708.09393}}].

\bibitem{Nguyen:2017yqw}
P.~Nguyen, T.~Devakul, M.~G. Halbasch, M.~P. Zaletel and B.~Swingle,
  \emph{{Entanglement of purification: from spin chains to holography}},
  \href{https://doi.org/10.1007/JHEP01(2018)098}{\emph{JHEP} {\bfseries 01}
  (2018) 098} [\href{https://arxiv.org/abs/1709.07424}{{\ttfamily
  1709.07424}}].

\bibitem{Bhattacharyya:2018sbw}
A.~Bhattacharyya, T.~Takayanagi and K.~Umemoto, \emph{{Entanglement of
  Purification in Free Scalar Field Theories}},
  \href{https://doi.org/10.1007/JHEP04(2018)132}{\emph{JHEP} {\bfseries 04}
  (2018) 132} [\href{https://arxiv.org/abs/1802.09545}{{\ttfamily
  1802.09545}}].

\bibitem{Caputa:2018xuf}
P.~Caputa, M.~Miyaji, T.~Takayanagi and K.~Umemoto, \emph{{Holographic
  Entanglement of Purification from Conformal Field Theories}},
  \href{https://doi.org/10.1103/PhysRevLett.122.111601}{\emph{Phys. Rev. Lett.}
  {\bfseries 122} (2019) 111601}
  [\href{https://arxiv.org/abs/1812.05268}{{\ttfamily 1812.05268}}].

\bibitem{Bao:2019wcf}
N.~Bao, A.~Chatwin-Davies, J.~Pollack and G.~N. Remmen, \emph{{Towards a Bit
  Threads Derivation of Holographic Entanglement of Purification}},
  \href{https://doi.org/10.1007/JHEP07(2019)152}{\emph{JHEP} {\bfseries 07}
  (2019) 152} [\href{https://arxiv.org/abs/1905.04317}{{\ttfamily
  1905.04317}}].

\bibitem{Pastawski:2015qua}
F.~Pastawski, B.~Yoshida, D.~Harlow and J.~Preskill, \emph{{Holographic quantum
  error-correcting codes: Toy models for the bulk/boundary correspondence}},
  \href{https://doi.org/10.1007/JHEP06(2015)149}{\emph{JHEP} {\bfseries 06}
  (2015) 149} [\href{https://arxiv.org/abs/1503.06237}{{\ttfamily
  1503.06237}}].

\bibitem{Engelhardt:2018kcs}
N.~Engelhardt and A.~C. Wall, \emph{{Coarse Graining Holographic Black Holes}},
   \href{https://arxiv.org/abs/1806.01281}{{\ttfamily 1806.01281}}.

\bibitem{Kusuki:2019evw}
Y.~Kusuki and K.~Tamaoka, \emph{{Entanglement Wedge Cross Section from CFT:
  Dynamics of Local Operator Quench}},
  \href{https://arxiv.org/abs/1909.06790}{{\ttfamily 1909.06790}}.

\bibitem{Kusuki:2019rbk}
Y.~Kusuki and K.~Tamaoka, \emph{{Dynamics of Entanglement Wedge Cross Section
  from Conformal Field Theories}},
  \href{https://arxiv.org/abs/1907.06646}{{\ttfamily 1907.06646}}.

\bibitem{Umemoto:2019jlz}
K.~Umemoto, \emph{{Quantum and Classical Correlations Inside the Entanglement
  Wedge}},  \href{https://arxiv.org/abs/1907.12555}{{\ttfamily 1907.12555}}.

\bibitem{fannes1973continuity}
M.~Fannes, \emph{A continuity property of the entropy density for spin lattice
  systems}, {\emph{Communications in Mathematical Physics} {\bfseries 31}
  (1973) 291}.

\bibitem{2012arXiv1207.1197A}
K.~M.~R. {Audenaert}, \emph{{Comparisons between quantum state
  distinguishability measures}}, {\emph{arXiv e-prints} (2012) arXiv:1207.1197}
  [\href{https://arxiv.org/abs/1207.1197}{{\ttfamily 1207.1197}}].

\bibitem{audenaert2006sharp}
K.~M. Audenaert, \emph{A sharp fannes-type inequality for the von neumann
  entropy}, {\emph{arXiv preprint quant-ph/0610146} (2006) }.

\bibitem{Engelhardt:2017aux}
N.~Engelhardt and A.~C. Wall, \emph{{Decoding the Apparent Horizon: A
  Coarse-Grained Holographic Entropy}},
  \href{https://arxiv.org/abs/1706.02038}{{\ttfamily 1706.02038}}.

\bibitem{new}
\emph{To appear}, .

\bibitem{doi:10.1063/1.1498001}
B.~M. Terhal, M.~Horodecki, D.~W. Leung and D.~P. DiVincenzo, \emph{The
  entanglement of purification},
  \href{https://doi.org/10.1063/1.1498001}{\emph{Journal of Mathematical
  Physics} {\bfseries 43} (2002) 4286}
  [\href{https://arxiv.org/abs/https://doi.org/10.1063/1.1498001}{{\ttfamily
  https://doi.org/10.1063/1.1498001}}].

\bibitem{Miyaji:2015yva}
M.~Miyaji and T.~Takayanagi, \emph{{Surface/State Correspondence as a
  Generalized Holography}},
  \href{https://doi.org/10.1093/ptep/ptv089}{\emph{PTEP} {\bfseries 2015}
  (2015) 073B03} [\href{https://arxiv.org/abs/1503.03542}{{\ttfamily
  1503.03542}}].

\bibitem{Nomura:2018kji}
Y.~Nomura, P.~Rath and N.~Salzetta, \emph{{Pulling the Boundary into the
  Bulk}}, \href{https://doi.org/10.1103/PhysRevD.98.026010}{\emph{Phys. Rev.}
  {\bfseries D98} (2018) 026010}
  [\href{https://arxiv.org/abs/1805.00523}{{\ttfamily 1805.00523}}].

\bibitem{Chen:2019mis}
B.~Chen, L.~Chen and C.-Y. Zhang, \emph{{Surface/State correspondence and
  $T\overline{T}$ deformation}},
  \href{https://arxiv.org/abs/1907.12110}{{\ttfamily 1907.12110}}.

\bibitem{Prudenziati:2019fev}
A.~Prudenziati, \emph{{Geometry of Entanglement}},
  \href{https://arxiv.org/abs/1907.06238}{{\ttfamily 1907.06238}}.

\bibitem{Cheng:2019aqf}
N.~Cheng, \emph{{Minimizations over Geometric Extensions in Holography}},
  \href{https://arxiv.org/abs/1909.09334}{{\ttfamily 1909.09334}}.

\bibitem{Guo:2019pfl}
W.-Z. Guo, \emph{{Entanglement of purification and disentanglement in CFTs}},
  \href{https://arxiv.org/abs/1904.12124}{{\ttfamily 1904.12124}}.

\bibitem{BabaeiVelni:2019pkw}
K.~Babaei~Velni, M.~R. Mohammadi~Mozaffar and M.~H. Vahidinia, \emph{{Some
  Aspects of Entanglement Wedge Cross-Section}},
  \href{https://doi.org/10.1007/JHEP05(2019)200}{\emph{JHEP} {\bfseries 05}
  (2019) 200} [\href{https://arxiv.org/abs/1903.08490}{{\ttfamily
  1903.08490}}].

\bibitem{Bao:2018fso}
N.~Bao, A.~Chatwin-Davies and G.~N. Remmen, \emph{{Entanglement of Purification
  and Multiboundary Wormhole Geometries}},
  \href{https://doi.org/10.1007/JHEP02(2019)110}{\emph{JHEP} {\bfseries 02}
  (2019) 110} [\href{https://arxiv.org/abs/1811.01983}{{\ttfamily
  1811.01983}}].

\bibitem{Bao:2018gck}
N.~Bao and I.~F. Halpern, \emph{{Conditional and Multipartite Entanglements of
  Purification and Holography}},
  \href{https://doi.org/10.1103/PhysRevD.99.046010}{\emph{Phys. Rev.}
  {\bfseries D99} (2019) 046010}
  [\href{https://arxiv.org/abs/1805.00476}{{\ttfamily 1805.00476}}].

\bibitem{Umemoto:2018jpc}
K.~Umemoto and Y.~Zhou, \emph{{Entanglement of Purification for Multipartite
  States and its Holographic Dual}},
  \href{https://doi.org/10.1007/JHEP10(2018)152}{\emph{JHEP} {\bfseries 10}
  (2018) 152} [\href{https://arxiv.org/abs/1805.02625}{{\ttfamily
  1805.02625}}].

\bibitem{Bhattacharyya:2019tsi}
A.~Bhattacharyya, A.~Jahn, T.~Takayanagi and K.~Umemoto, \emph{{Entanglement of
  Purification in Many Body Systems and Symmetry Breaking}},
  \href{https://doi.org/10.1103/PhysRevLett.122.201601}{\emph{Phys. Rev. Lett.}
  {\bfseries 122} (2019) 201601}
  [\href{https://arxiv.org/abs/1902.02369}{{\ttfamily 1902.02369}}].

\bibitem{Harper:2019lff}
J.~Harper and M.~Headrick, \emph{{Bit threads and holographic entanglement of
  purification}},  \href{https://arxiv.org/abs/1906.05970}{{\ttfamily
  1906.05970}}.

\bibitem{Hirai:2018jwy}
H.~Hirai, K.~Tamaoka and T.~Yokoya, \emph{{Towards Entanglement of Purification
  for Conformal Field Theories}},
  \href{https://doi.org/10.1093/ptep/pty063}{\emph{PTEP} {\bfseries 2018}
  (2018) 063B03} [\href{https://arxiv.org/abs/1803.10539}{{\ttfamily
  1803.10539}}].

\bibitem{bagchi2015monogamy}
S.~Bagchi and A.~K. Pati, \emph{Monogamy, polygamy, and other properties of
  entanglement of purification}, {\emph{Physical Review A} {\bfseries 91}
  (2015) 042323}.

\bibitem{chen2012non}
J.~Chen and A.~Winter, \emph{Non-additivity of the entanglement of purification
  (beyond reasonable doubt)}, {\emph{arXiv preprint arXiv:1206.1307} (2012) }.

\bibitem{Marolf:2015vma}
D.~Marolf, H.~Maxfield, A.~Peach and S.~F. Ross, \emph{{Hot multiboundary
  wormholes from bipartite entanglement}},
  \href{https://doi.org/10.1088/0264-9381/32/21/215006}{\emph{Class. Quant.
  Grav.} {\bfseries 32} (2015) 215006}
  [\href{https://arxiv.org/abs/1506.04128}{{\ttfamily 1506.04128}}].

\bibitem{Peach:2017npp}
A.~Peach and S.~F. Ross, \emph{{Tensor Network Models of Multiboundary
  Wormholes}}, \href{https://doi.org/10.1088/1361-6382/aa6b0f}{\emph{Class.
  Quant. Grav.} {\bfseries 34} (2017) 105011}
  [\href{https://arxiv.org/abs/1702.05984}{{\ttfamily 1702.05984}}].

\bibitem{Bao:2019fpq}
N.~Bao, G.~Penington, J.~Sorce and A.~C. Wall, \emph{{Holographic Tensor
  Networks in Full AdS/CFT}},
  \href{https://arxiv.org/abs/1902.10157}{{\ttfamily 1902.10157}}.

\bibitem{Bao:2018pvs}
N.~Bao, G.~Penington, J.~Sorce and A.~C. Wall, \emph{{Beyond Toy Models:
  Distilling Tensor Networks in Full AdS/CFT}},
  \href{https://arxiv.org/abs/1812.01171}{{\ttfamily 1812.01171}}.

\bibitem{Dong:2016fnf}
X.~Dong, \emph{{The Gravity Dual of Renyi Entropy}},
  \href{https://doi.org/10.1038/ncomms12472}{\emph{Nature Commun.} {\bfseries
  7} (2016) 12472} [\href{https://arxiv.org/abs/1601.06788}{{\ttfamily
  1601.06788}}].

\bibitem{Akers:2018fow}
C.~Akers and P.~Rath, \emph{{Holographic Renyi Entropy from Quantum Error
  Correction}}, \href{https://doi.org/10.1007/JHEP05(2019)052}{\emph{JHEP}
  {\bfseries 05} (2019) 052}
  [\href{https://arxiv.org/abs/1811.05171}{{\ttfamily 1811.05171}}].

\bibitem{Dong:2018seb}
X.~Dong, D.~Harlow and D.~Marolf, \emph{{Flat entanglement spectra in
  fixed-area states of quantum gravity}},
  \href{https://arxiv.org/abs/1811.05382}{{\ttfamily 1811.05382}}.

\bibitem{Bao:2019aol}
N.~Bao, M.~Moosa and I.~Shehzad, \emph{{The holographic dual of R\'enyi
  relative entropy}},  \href{https://arxiv.org/abs/1904.08433}{{\ttfamily
  1904.08433}}.

\bibitem{Tamaoka:2018ned}
K.~Tamaoka, \emph{{Entanglement Wedge Cross Section from the Dual Density
  Matrix}}, \href{https://doi.org/10.1103/PhysRevLett.122.141601}{\emph{Phys.
  Rev. Lett.} {\bfseries 122} (2019) 141601}
  [\href{https://arxiv.org/abs/1809.09109}{{\ttfamily 1809.09109}}].

\bibitem{Kusuki:2019zsp}
Y.~Kusuki, J.~Kudler-Flam and S.~Ryu, \emph{{Derivation of holographic
  negativity in ${\it AdS}_3/{\it CFT}_2$}},
  \href{https://arxiv.org/abs/1907.07824}{{\ttfamily 1907.07824}}.

\bibitem{Levin:2019krg}
J.~Levin, O.~DeWolfe and G.~Smith, \emph{{Correlation measures and distillable
  entanglement in AdS/CFT}},
  \href{https://arxiv.org/abs/1909.04727}{{\ttfamily 1909.04727}}.

\bibitem{Marolf:2019zoo}
D.~Marolf, \emph{{CFT sewing as the dual of AdS cut-and-paste}},
  \href{https://arxiv.org/abs/1909.09330}{{\ttfamily 1909.09330}}.

\bibitem{Bao:2019zqc}
N.~Bao and N.~Cheng, \emph{{Multipartite Reflected Entropy}},
  \href{https://doi.org/10.1007/JHEP10(2019)102}{\emph{JHEP} {\bfseries 10}
  (2019) 102} [\href{https://arxiv.org/abs/1909.03154}{{\ttfamily
  1909.03154}}].

\bibitem{Chu:2019etd}
J.~Chu, R.~Qi and Y.~Zhou, \emph{{Generalizations of Reflected Entropy and the
  Holographic Dual}},  \href{https://arxiv.org/abs/1909.10456}{{\ttfamily
  1909.10456}}.

\bibitem{Harlow:2016vwg}
D.~Harlow, \emph{{The Ryu-Takayanagi Formula from Quantum Error Correction}},
  \href{https://doi.org/10.1007/s00220-017-2904-z}{\emph{Commun. Math. Phys.}
  {\bfseries 354} (2017) 865}
  [\href{https://arxiv.org/abs/1607.03901}{{\ttfamily 1607.03901}}].

\bibitem{Dong:2019piw}
X.~Dong and D.~Marolf, \emph{{One-loop universality of holographic codes}},
  \href{https://arxiv.org/abs/1910.06329}{{\ttfamily 1910.06329}}.

\bibitem{brian}
C.~Cao and B.~Swingle, \emph{To appear}, .

\bibitem{dur2000three}
W.~D{\"u}r, G.~Vidal and J.~I. Cirac, \emph{Three qubits can be entangled in
  two inequivalent ways}, {\emph{Physical Review A} {\bfseries 62} (2000)
  062314}.

\end{thebibliography}\endgroup

\end{document}